\newcommand{\R}{\mathbb{R}}
\def\thm@space@setup{\thm@preskip=0pt
\thm@postskip=10pt}
\newtheorem{theorem}{Theorem}
\newtheorem{lemma}[theorem]{Lemma}
\newtheorem{claim}[theorem]{Claim}
\theoremstyle{definition}
\newtheorem{definition}{Definition}
\icmltitlerunning{Utility Preserving Secure Private Data Release}
\begin{document}

\twocolumn[
\icmltitle{Utility Preserving Secure Private Data Release}



\icmlsetsymbol{equal}{*}

\begin{icmlauthorlist}
\icmlauthor{\textbf{Jasjeet Dhaliwal}}{to}
\icmlauthor{\textbf{Geoffrey So}}{to}
\icmlauthor{\textbf{Aleatha Parker-Wood}}{to}
\icmlauthor{\textbf{Melanie Beck}}{to}
\end{icmlauthorlist}

\icmlaffiliation{to}{Center for Advanced Machine Learning, Symantec Corporation, Mountain View, California, USA}

\icmlcorrespondingauthor{Jasjeet Dhaliwal}{jasjeet\_dhaliwal@symantec.com}

\icmlkeywords{Machine Learning, ICML}

\vskip 0.3in
]



\printAffiliationsAndNotice{}  

\begin{abstract}
Differential privacy mechanisms that also make reconstruction of the data impossible come at a cost - a decrease in utility. In this paper, we tackle this problem by designing a private data release mechanism that makes reconstruction of the original data impossible and also preserves utility for a wide range of machine learning algorithms. We do so by combining the Johnson-Lindenstrauss (JL) transform with noise generated from a Laplace distribution.  While the JL transform can itself provide privacy guarantees \cite{blocki2012johnson} and make reconstruction impossible, we do not rely on its differential privacy properties and only utilize its ability to make reconstruction impossible. We present novel proofs to show that our mechanism is differentially private under single element changes as well as single row changes to any database.  In order to show utility, we prove that our mechanism maintains pairwise distances between points in expectation and also show that its variance is proportional to the dimensionality of the subspace we project the data into. Finally, we experimentally show the utility of our mechanism by deploying it on the task of clustering. 
\end{abstract}

\section{Introduction}
\label{introduction}
While the recent surge in data available for machine learning has created new opportunities, it also poses a risk to the privacy of individuals. Differential privacy \cite{dwork2006calibrating} is the most widely accepted framework that attempts to address this issue by capturing precisely how much additional information of an individual is leaked by participating in a database that would not have been leaked otherwise. Consequently, by providing guarantees within the framework of differential privacy, one can protect the privacy of the individuals participating in a database while utilizing the private data to build machine learning models. 

However, differentially private releases of aggregate statistics have been shown to be susceptible to reconstruction and tracing attacks \cite{dwork2017exposed} under certain constraints. This is also true in the non-interactive setting where the amount of noise added to the data must be increased drastically in order to make reconstruction difficult \cite{bhowmick2018protection, dwork2014algorithmic}. Even differentially private machine learning models have been shown to be susceptible to membership inference attacks that leak information about individual participation \cite{rahman2018membership, shokri2017membership}. The above issues create the need for stronger privacy mechanisms that make it provably impossible for an adversary to reconstruct the original data, hence eliminating concerns about privacy. However, mechanisms that do so, achieve this goal at the cost of a significant drop in utility \cite{blocki2012johnson}. One approach to improve this utility has been to design private data release mechanisms that are tailored to specific machine learning algorithms \cite{upadhyay2014differentially, blocki2012johnson,sheffet2015private}. But this approach limits the ability of the analyst to compare different machine learning algorithms on a given dataset. For instance, if an analyst wants to perform clustering in order to better understand the data before classification,  the covariance matrix or other similar aggregate statistics may not suffice. Similarly, an analyst may wish to compare a neural network, linear regression, and a random forest on a given dataset before deploying a model. It is therefore important to design a mechanism that privately releases data in a form that it can be used by a wide variety of machine learning algorithms while preserving utility.

 In this paper, we tackle this problem of preserving utility for a broad class of machine learning algorithms while also making reconstruction of the original data impossible. We do so by using the JL transform in conjunction with the Laplace mechanism. The JL transform is a powerful dimensionality reduction tool that preserves pairwise distances between points \cite{dasgupta2003elementary} and has also been shown to provide differential privacy guarantees by itself \cite{blocki2012johnson}. We do not rely on the differential privacy guarantees of the JL transform but utilize the fact that it makes it impossible to reconstruct the exact original values (or the original dimensionality) of the data 
\cite{liu2006random}. We do not use the differential privacy properties of the JL transform because in order to achieve differential privacy via the JL transform by itself, certain constraints must be imposed on the the rank and the spectrum of the data matrix. More specifically, the data matrix is required to be full rank and the smallest eigenvalue of the data matrix must be larger than a given threshold.  For data matrices that do not meet the required constraints, the mechanism in \cite{blocki2012johnson} modifies the spectrum of the matrix via an operation that greatly compromises utility. This leaves open the problem of utilizing the ability of the JL transform to make reconstruction impossible while providing differential privacy via a mechanism that preserves utility.

We make progress in this direction by combining the power of the JL transform and the Laplace mechanism in a manner that preserves utility while also providing differential privacy guarantees such that the original data cannot be reconstructed. Our approach is similar to \cite{kenthapadi2012privacy} and can be considered as an extension to their work. Our data release mechanism provides utility for general machine learning tasks and is differentially private under single element changes as well as row changes. We prove the differential privacy guarantees provided by our mechanism and also propose an algorithm that maintains pairwise distances between private data points in expectation. We chose the most general task of clustering in order to show the effectiveness of our methods experimentally.

Our contributions in this paper are: 
\begin{itemize}
    \item We propose a differentially private data release mechanism that preserves privacy and makes it impossible for an adversary to reconstruct the original data.
    \item We propose a distance recovery algorithm and prove that it maintains pairwise distances in expectation. We also prove precise variance guarantees for the distance recovery algorithm.
    \item We experimentally validate the utility of our mechanism by showing that it maintains pairwise distances and performs well on the general task of clustering
\end{itemize}

The paper is organized as follows: we first provide the required background on differential privacy in Section 2. We describe our mechanism in Section 3 and prove its differential privacy guarantees. We provide  utility guarantees in Section 4 and experimentally validate our claims in Section 5, showing the effectiveness of our mechanism. Related work is covered in Section 6, followed by the Conclusion in Section 7. 

\section{Background}
In this section we define differential privacy and also cover other necessary mathematical background required for our results.

\subsection{Differential Privacy}
Differential privacy captures precisely how likely is it for a third-party to ascertain whether an individual participated in a database or not. In order to formalize the definition of differential privacy, we first introduce the notion of neighboring databases. \\

\theoremstyle{definition}
\begin{definition}[Neighboring databases]
Given an input space $\mathcal{X} \subseteq \mathbb{R}^{d}$, we can represent a database with $n$ entries,  $X \in \mathcal{X}^{n}$ as $X \in \mathbb{R}^{n \times d}$. Then, two databases $X_1, X_2 \in \mathbb{R}^{n \times d}$ are row-wise neighbors if they differ in exactly one row. They are considered element-wise neighbors, if they differ in exactly one element.
\end{definition}

\theoremstyle{definition}
\begin{definition}[Probability Simplex]
Given a set $\mathcal{Y}$, the probability simplex over $\mathcal{Y}$ is defined as : $\Delta{\mathcal{Y}} = \left\{ \mathbf{y} \in \mathbb{R}^{|\mathcal{Y}|}: y_i \geq 0, \sum_{i=1}^{|\mathcal{Y}|}y_i = 1  \right\}$
\end{definition}

\theoremstyle{definition}
\begin{definition}[Randomization Mechanism]
Given two sets $\mathcal{X}, \mathcal{Y}$, a randomization mechanism is a function $\mathcal{M}: \mathcal{X} \to \Delta{\mathcal{Y}}$.
\end{definition}

Thus, a randomization mechanism defines a probability distribution over the set $\mathcal{Y}$. Given an input $ x \in \mathcal{X}$, a randomization mechanism $\mathcal{M}$, maps $x$ to $y \in \mathcal{Y}$ with probability $(\mathcal{M}(x))_{y}$, which is the probability for element $y$ under the distribution $(\mathcal{M}(x))$.\\
\theoremstyle{definition}
\begin{definition}[Privacy Loss]
For a randomization mechanism $\mathcal{M}$, the privacy loss for two neighboring databases $X_1, X_2$, is defined as: $\mathcal{L}^{D}_{\mathcal{M}(X_1) || \mathcal{M}(X_2)} = \text{ln} \left (\frac{\mathbf{P}[\mathcal{M}(X_1) \subseteq D] -\delta}{\mathbf{P}[\mathcal{M}(X_2) \subseteq D]}\right) $
\end{definition}

\theoremstyle{definition}
\begin{definition}[Differential Privacy]
For any $\epsilon > 0$, and $\delta \in [0,1)$, a randomization mechanism $\mathcal{M}$ is $(\epsilon, \delta)$ differentially private on domain $\mathcal{X}$ if for two neighboring databases $X_1, X_2$, the privacy loss  $\left |\mathcal{L}^{D}_{\mathcal{M}(X_1) || \mathcal{M}(X_2)} \right | \leq \epsilon$.
\end{definition}

 For a more thorough review of $(\epsilon, \delta)$ privacy, the reader is referred to  \cite{dwork2014algorithmic}. 
\subsection{Johnson-Lindenstrauss Lemma}
The Johnson-Lindenstrauss Lemma states that a set of points in a high-dimensional space can be embedded into a lower dimensional space such that the distances between the projected points are preserved with high probability. We provide a statement of this lemma that relies on a projection matrix using values from the Gaussian distribution.

Consider a finite set $S \subset \mathbb{R}^{d}$ with $|S| = n$. Let $P \in \mathbb{R}^{d \times k}$ be a real valued matrix such that $P_{ij} \sim \mathcal{N}(0, \frac{1}{k})$, where $k = \Omega(\Lambda^{-2} \text{log}(n))$ for $0 < \Lambda \leq 1$. Then for any $x,y \in S$, we have:

$(1 - \Lambda)||x-y||_{2}^{2} \leq ||xP - yP||_2^2 \leq (1 + \Lambda)||x-y||_{2}^{2}$

Further, $\mathbb{E}\left[||xP - yP||_2^{2}\right] = ||x - y||_2^{2}$.  This result is called the Johnson-Lindenstrauss (JL) lemma. We refer the reader to \cite{dasgupta2003elementary} for a proof of the lemma.

\section{Privacy Guarantees}
Our randomization mechanism  utilizes the JL transform to reduce the dimensionality of the input and then uses the Laplacian mechanism to provide differential privacy guarantees. Since the elements of the JL matrix are normally distributed, we utilize their properties in conjunction with the Laplace mechanism to provide differential privacy guarantees while still maintaining utility. We do not rely on the differential privacy properties of the JL transform itself \cite{blocki2012johnson} because it requires a transformation of the data matrix that does not preserve any utility in certain practical cases (see Section \ref{related_work} for details).

Instead our mechanism design follows that of \cite{kenthapadi2012privacy} with two key differences: a) our mechanism adds noise from the Laplace distribution (whereas \cite{kenthapadi2012privacy} added noise from a Gaussian distribution) b) our mechanism provides privacy guarantees with respect to element and row-wise changes (whereas \cite{kenthapadi2012privacy} only provide guarantees with respect to element wise changes). The JL transform not only reduces dimensionality of the input, but also provides further security from attackers by making it impossible to reconstruct the original data values if the JL transformation matrix is kept secret \cite{liu2006random}. We now describe our randomization mechanism.

\subsection{Randomization Mechanism}
Given database $X \in \R^{n \times d}$, our mechanism first projects the data onto a lower dimensional subspace $\R^{k}$, with $k << d$, and then adds a noise matrix $\Delta \in \R^{n \times k}$ to the projected data. The entries of this noise matrix are drawn i.i.d from a Laplacian distribution. The mechanism requires the projection parameter $k$ which determines the dimensionality of the subspace that we wish to project the data into. In addition, it requires the privacy parameters $c$ and $\epsilon$ in order to determine the scale of the Laplacian distribution, where $\epsilon$ is determined by the level of privacy we wish to maintain and $c$ is a parameter that will become clear in the proofs of privacy guarantees. Algorithm \ref{alg:algorithm1}. outlines our mechanism.  \\

\begin{algorithm}
	\SetKwInOut{Input}{Input}
	\SetKwInOut{Output}{Output}

	\Input{$X \in \R^{n \times d}, k,c, \epsilon$}
	\Output{ $Z \in \R^{n \times k}$}
    
    \begin{enumerate}
    \item Construct JL projection matrix $P \in \R^{d \times k}$ such that $P_{ij} \sim \mathcal{N}(0, \frac{1}{k})$
    \item Set $Y = XP$
    \item Construct noise matrix $\Delta \in \R^{n \times k}$ such that $\Delta_{ij} \sim \text{Laplacian}(0,\frac{c}{\epsilon})$.
    \item Return $Z = Y + \Delta $
    \end{enumerate}
    
 \caption{Randomization Mechanism}
 \label{alg:algorithm1}
\end{algorithm} 


Note that we do not release the projection matrix $P$, in order to eliminate the possibility of a reconstruction attack \cite{dwork2008new}.

\subsection{Privacy Guarantees}
\begin{lemma}\label{lemma1}
 For any $X, X' \in \R^{n \times d}$, such that $X$ and $X'$ differ in exactly one element with $||X - X'||_1 \leq 1$, we have for any $A \in \R^{d \times k}$,  $||XA - X'A||_1 \leq \sqrt{k} \max\limits_{1 \leq i \leq d} ||A_i||_2$, where $A_i$ is the $i$th row of $A$.
\end{lemma}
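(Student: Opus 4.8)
The plan is to reduce everything to the observation that the difference $D := X - X'$ is a matrix with a single nonzero entry. Since $X$ and $X'$ are element-wise neighbors, there is a unique index pair $(p,q)$ with $D_{pq} \neq 0$ and all other entries zero; because the entry-wise $\ell_1$ norm of such a matrix is just the absolute value of its lone nonzero entry, we have $|D_{pq}| = \|X - X'\|_1 \leq 1$. The first step is then simply to invoke linearity: $XA - X'A = DA$, so the whole claim concerns only $\|DA\|_1$.

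Next I would exploit the sparsity of $D$. Multiplying a single-entry matrix by $A$ produces a matrix whose only nonzero row is row $p$, and that row equals $D_{pq}$ times the $q$-th row $A_q$ of $A$, since $(DA)_{pj} = \sum_{l} D_{pl} A_{lj} = D_{pq} A_{qj}$. Consequently the entry-wise $\ell_1$ norm collapses to a single sum over $j$: $\|DA\|_1 = \sum_{j=1}^{k} |D_{pq} A_{qj}| = |D_{pq}|\,\|A_q\|_1$. The final step is to pass from the $\ell_1$ norm of the vector $A_q \in \R^k$ to its $\ell_2$ norm. By Cauchy--Schwarz (equivalently the standard norm comparison in $\R^k$), $\|A_q\|_1 \leq \sqrt{k}\,\|A_q\|_2$. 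Combining this with $|D_{pq}| \leq 1$ and bounding $\|A_q\|_2 \leq \max_{1 \leq i \leq d} \|A_i\|_2$ gives $\|DA\|_1 \leq \sqrt{k}\,\max_{1 \leq i \leq d}\|A_i\|_2$, which is exactly the asserted bound.

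The argument is essentially a routine computation, so I do not expect a genuine obstacle. The only point requiring care is fixing the convention that $\|\cdot\|_1$ denotes the entry-wise $\ell_1$ norm (the quantity that governs the sensitivity constant for the Laplace mechanism used later), and then checking that for a single-entry matrix this norm equals the absolute value of that entry; this is precisely what justifies reading the hypothesis $\|X-X'\|_1 \leq 1$ as $|D_{pq}| \leq 1$. It is also worth noting that the factor $\sqrt{k}$ enters solely through the $\ell_1$-versus-$\ell_2$ comparison on a $k$-dimensional row, which foreshadows why the sensitivity — and hence the required Laplace scale — grows with the projection dimension $k$.
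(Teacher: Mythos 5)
Your proposal is correct and follows essentially the same route as the paper: both arguments reduce to the fact that $X - X'$ has a single nonzero entry of magnitude at most $1$, so $\|(X-X')A\|_1$ collapses to (at most) the $\ell_1$ norm of one row of $A$, which is then bounded by $\sqrt{k}$ times its $\ell_2$ norm via Cauchy--Schwarz. Your version merely makes explicit the intermediate identity $\|DA\|_1 = |D_{pq}|\,\|A_q\|_1$ that the paper compresses into a single inequality.
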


\begin{proof}
We prove the above by direct calculation. 
\begin{align*}
||XA - X'A||_1  &= ||(X - X')A||_1 \nonumber \\
&\leq \max\limits_{1 \leq i \leq d} \sum_{j=1}^{k} |A_{ij}| \nonumber \\
&= \max\limits_{1 \leq i \leq d} ||A_i||_1 \nonumber \\
&\leq \sqrt{k} \max\limits_{1 \leq i \leq d} ||A_i||_2 
\end{align*}

where the second inequality follows from the fact that $(X - X')$ only contains one non-zero element and the last inequality follows from the Cauchy-Schwarz inequality for inner product spaces.
\end{proof}

\begin{lemma}\label{lemma2} \cite{kenthapadi2012privacy} Let $P \in \R^{d \times k}$ such that $P_{ij} \sim \mathcal{N}(0, \frac{1}{k})$, then $\mathbf{Pr}\left[\max\limits_{1 \leq i \leq d} ||P_i||_2 > 
1 + \sqrt{\frac{2x}{k}}\right] < de^{- x}$, for any $x > 0$.
\end{lemma}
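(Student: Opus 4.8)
The plan is to prove the bound for a single row and then combine the rows with a union bound, so everything reduces to a one-dimensional Gaussian-norm concentration estimate. First I would normalize the entries: writing $P_{ij} = g_{ij}/\sqrt{k}$ with $g_{ij} \sim \mathcal{N}(0,1)$ i.i.d., each row satisfies $\|P_i\|_2 = \|g_i\|_2/\sqrt{k}$, where $g_i \in \R^{k}$ is a standard Gaussian vector. Hence $\|P_i\|_2^2 = \frac{1}{k}\sum_{j=1}^{k} g_{ij}^2$ is a scaled chi-squared variable with $k$ degrees of freedom, and the event $\{\|P_i\|_2 > 1 + \sqrt{2x/k}\}$ is exactly $\{\|g_i\|_2 > \sqrt{k} + \sqrt{2x}\}$, since $\sqrt{k}\,(1 + \sqrt{2x/k}) = \sqrt{k} + \sqrt{2x}$.

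Next I would establish the per-row estimate $\mathbf{Pr}[\|g_i\|_2 > \sqrt{k} + \sqrt{2x}] \le e^{-x}$. The cleanest route is Gaussian concentration for Lipschitz functions: the map $g \mapsto \|g\|_2$ is $1$-Lipschitz, so for a standard Gaussian $g$ one has $\mathbf{Pr}[\|g\|_2 \ge \mathbb{E}\|g\|_2 + t] \le e^{-t^2/2}$. Because $\mathbb{E}\|g\|_2 \le \sqrt{\mathbb{E}\|g\|_2^2} = \sqrt{k}$ by Jensen's inequality, taking $t = \sqrt{2x}$ yields $\mathbf{Pr}[\|g_i\|_2 \ge \sqrt{k} + \sqrt{2x}] \le e^{-x}$. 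Alternatively one can bypass the Lipschitz-concentration machinery and invoke a direct chi-squared tail bound of Laurent--Massart type: since $(\sqrt{k}+\sqrt{2x})^2 = k + 2\sqrt{2kx} + 2x$ dominates the threshold $k + 2\sqrt{kx} + 2x$ appearing there, monotonicity of the tail again gives $e^{-x}$.

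Finally I would assemble the rows. By the union bound,
$$\mathbf{Pr}\left[\max_{1 \le i \le d} \|P_i\|_2 > 1 + \sqrt{\tfrac{2x}{k}}\right] \le \sum_{i=1}^{d} \mathbf{Pr}\left[\|P_i\|_2 > 1 + \sqrt{\tfrac{2x}{k}}\right] \le d\,e^{-x},$$
and the strict inequality $< d e^{-x}$ asserted in the statement follows from the strictness of the event $\{\cdot > 1 + \sqrt{2x/k}\}$ together with continuity of the Gaussian tail.

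The main obstacle is the per-row concentration estimate: the normalization and the union bound are routine, but controlling the deviation of $\|g_i\|_2$ above $\sqrt{k}$ genuinely requires a concentration inequality. Two points deserve care. First, replacing $\mathbb{E}\|g\|_2$ by the clean constant $\sqrt{k}$ must go in the correct direction: Jensen gives $\mathbb{E}\|g\|_2 \le \sqrt{k}$, which only enlarges the centered threshold and hence weakens the tail event in our favor. Second, one must track the exponent so that the factor $\sqrt{2}$ sitting inside $\sqrt{2x/k}$ produces precisely $e^{-x}$ via $t=\sqrt{2x}$, rather than a different constant in the exponent.
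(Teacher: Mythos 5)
The paper does not actually prove this lemma---it is imported verbatim from the cited reference \cite{kenthapadi2012privacy} and used as a black box---so there is no in-paper argument to compare against. Your derivation is correct and is essentially the standard one: rescaling reduces $\|P_i\|_2$ to $\frac{1}{\sqrt{k}}\|g_i\|_2$ for a standard Gaussian vector $g_i\in\R^k$, the per-row tail $\mathbf{Pr}[\|g_i\|_2 \ge \sqrt{k}+\sqrt{2x}] \le e^{-x}$ follows either from Gaussian concentration for the $1$-Lipschitz map $g\mapsto\|g\|_2$ together with $\mathbb{E}\|g\|_2\le\sqrt{k}$ (Jensen, applied in the favorable direction as you note), or from a Laurent--Massart chi-squared tail bound, and the union bound over the $d$ rows gives the factor $d$. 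The exponent bookkeeping ($t=\sqrt{2x}$ giving $e^{-t^2/2}=e^{-x}$) is right. The only soft spot is your justification of the \emph{strict} inequality $<de^{-x}$: continuity of the distribution only shows that $>$ versus $\ge$ in the event is immaterial, not that the bound is unattained; strictness actually comes from the slack in the estimate itself (e.g.\ $\mathbb{E}\|g_i\|_2$ is strictly less than $\sqrt{k}$ and the concentration bound is not tight), but this distinction is immaterial to how the lemma is used in Theorem \ref{theorem3}.
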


\begin{theorem}\label{theorem3}
For any two element-wise neighboring databases $X,X' \in \R^{n \times d}$, such that $||X - X'||_1 \leq 1$, Algorithm 1. achieves $\epsilon$-differential privacy with probability at least $1 -  de^{-\frac{k}{2}}$,  with respect to changes in a single element.
\end{theorem}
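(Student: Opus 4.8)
The plan is to recognize Algorithm \ref{alg:algorithm1}, for a fixed draw of the projection matrix $P$, as an instance of the standard Laplace mechanism applied to the map $f(X) = XP$. Once $P$ is fixed, the only remaining randomness is the i.i.d. Laplace noise $\Delta$, and since we are targeting pure $\epsilon$-privacy we take $\delta = 0$, so the privacy loss is the ordinary log-density-ratio and $\epsilon$-differential privacy will follow from bounding the $L_1$ sensitivity of $f$. The probabilistic qualifier in the statement should then be understood as a statement about $P$: I would condition on the ``good'' event that $P$ has controlled row norms, which by Lemma \ref{lemma2} occurs with probability at least $1 - de^{-k/2}$, and show that on this event the mechanism is exactly $\epsilon$-private.

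First I would write the output density explicitly. Since $Z = XP + \Delta$ with $\Delta_{ij} \sim \mathrm{Laplacian}(0, c/\epsilon)$ i.i.d., the density of $Z$ at a point $z \in \R^{n \times k}$ factorizes across coordinates, each factor being proportional to $\exp(-\tfrac{\epsilon}{c}|z_{ij} - (XP)_{ij}|)$. Forming the privacy loss between the outputs for $X$ and $X'$ and using the triangle inequality $|z_{ij} - (X'P)_{ij}| - |z_{ij} - (XP)_{ij}| \le |(XP)_{ij} - (X'P)_{ij}|$ coordinatewise, the loss collapses to $\tfrac{\epsilon}{c}||XP - X'P||_1$. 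The whole problem therefore reduces to bounding this $L_1$ distance.

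This is exactly where the two lemmas enter. Since $X$ and $X'$ are element-wise neighbors with $||X - X'||_1 \le 1$, Lemma \ref{lemma1} (with $A = P$) gives $||XP - X'P||_1 \le \sqrt{k}\,\max_{i}||P_i||_2$. To control the remaining maximum row norm I would invoke Lemma \ref{lemma2} with the particular choice $x = k/2$, for which $1 + \sqrt{2x/k} = 2$; this yields $\max_i ||P_i||_2 \le 2$ with probability at least $1 - de^{-k/2}$, matching the probability in the statement. Combining the two bounds gives $||XP - X'P||_1 \le 2\sqrt{k}$ on the good event, so the privacy loss is at most $\tfrac{2\sqrt{k}\,\epsilon}{c}$. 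Choosing $c = 2\sqrt{k}$ (which is presumably the intended role of the parameter $c$) forces this to be at most $\epsilon$, completing the argument.

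The main obstacle I anticipate is conceptual rather than computational: being careful about which source of randomness the probability refers to. The Laplace noise is what delivers differential privacy, whereas the failure probability $de^{-k/2}$ is over the draw of $P$. The cleanest way to handle this is to fix $P$ in the good set first and treat the Laplace argument as deterministic in $P$, so that the high-probability event is cleanly separated from the exact $\epsilon$-DP guarantee. A secondary point worth stating explicitly is the calibration $c = 2\sqrt{k}$, since the theorem as phrased hides the dependence of $c$ on $k$ inside the sensitivity bound.
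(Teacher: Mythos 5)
Your proposal is correct and follows essentially the same route as the paper: Lemma \ref{lemma1} with $A=P$ to reduce the $L_1$ sensitivity of $X \mapsto XP$ to $\sqrt{k}\max_i\|P_i\|_2$, Lemma \ref{lemma2} with $x=k/2$ to bound that maximum by $2$ with probability $1-de^{-k/2}$, the triangle-inequality manipulation of the Laplace density ratio, and the calibration $c=2\sqrt{k}$. Your reorganization --- fixing $P$ in the good event first and then running the Laplace argument deterministically --- is a cleaner separation of the two sources of randomness than the paper's presentation, but it is the same proof.
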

\begin{proof}
 Using Lemma \ref{lemma1} we have, $||XP - X'P||_1  \leq 
\sqrt{k} \max\limits_{1 \leq i \leq d} ||P_i||_2 $. Let, $Y = XP$ and $Y' = X'P$ and without loss of generality, consider $Y,Y', \Delta \in \R^{nk}$. Now, set $Z = Y + \Delta$,  $Z' = Y' + \Delta$, and let $D \subset \R^{nk}$. Due to the i.i.d assumption on the elements of $\Delta$ we have:

\begin{align*}
\mathbf{Pr}[Z \in D] 
&= \frac{1}{(2b)^{nk}}\int_{D} e^{-\frac{1}{b}(||z - Y||_1)}dz \nonumber \\
&\geq \frac{1}{(2b)^{nk}}\int_{D} e^{-\frac{1}{b}(||z - Y'||_1 + ||Y' - Y||_1)}dz \nonumber \\
&= \frac{1}{(2b)^{nk}}\int_{D} e^{-\frac{1}{b}(||z - Y'||_1)} e^{-\frac{1}{b}(||Y' - Y||_1)}dz \nonumber \\
&= e^{-\frac{1}{b}(||Y' - Y||_1)}P[Z' \in D] \nonumber \\
\end{align*}

$\implies \mathbf{Pr}[Z' \in D] \leq e^{\frac{1}{b}(||Y' - Y||_1)}\mathbf{Pr}[Z \in D]$

We set $\epsilon = \frac{c}{b}$ in Algorithm \ref{alg:algorithm1}, therefore, in order to preserve privacy, we must constrain $\frac{||Y - Y'||_1}{b} < \frac{c}{b}$. 

\begin{align}
\mathbf{Pr}\left[\frac{||Y - Y'||_1}{b} > \frac{c}{b}\right] &=\mathbf{Pr}[||Y - Y'||_1 > c] \nonumber \\
&\leq \mathbf{Pr}\left[  \max\limits_{1 \leq i \leq d} ||P_i||_2 > \frac{c}{\sqrt{k}}\right] \nonumber 
\end{align}

Setting $c = 2\sqrt{k}$, and $x$ in Lemma \ref{lemma2} to $\frac{k}{2}$, we get: 

\begin{align}
  \mathbf{Pr}\left[  \max\limits_{1 \leq i \leq d} ||P_i||_2 > 2\right] \leq de^{-\frac{k}{2}}\nonumber
\end{align}

Hence, Algorithm 1. achieves $\epsilon$-differential privacy with probability at least $1 -  de^{-\frac{k}{2}}$. 

\end{proof}

Thus, Theorem \ref{theorem3} provides privacy guarantees and also utilizes the dimensionality reduction properties of the JL transform. It is similar to the work done by \cite{kenthapadi2012privacy} in which the authors define a random mechanism that first does a JL transform and then adds Gaussian noise. Since the above result is limited to providing privacy guarantees for element-wise changes, we now focus on extending it to row-wise changes. That is, we now show that Algorithm 1. provides differential privacy when two databases differ by one row. \\

\begin{lemma}\label{lemma4}
 If $P \in \R^{d \times k}$ with $P_{ij} \sim \mathcal{N}(0, \frac{1}{k})$, and $v \in R^d$ then  $\mathbf{Pr} \left[k \max\limits_{1 \leq i \leq k} \left| \sum_{j=1}^{d} v_jP_{ji} \right| > kt \right] \leq 2ke^{\frac{-kt^2}{2||v||^2_2}}$. 
\end{lemma}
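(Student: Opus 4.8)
The plan is to reduce the claim to a union bound over the $k$ columns of $P$ combined with a standard Gaussian tail inequality. The key observation is that each of the $k$ inner sums appearing in the maximum is a univariate Gaussian whose variance is exactly $\|v\|_2^2/k$.

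First I would fix a column index $i \in \{1,\dots,k\}$ and set $W_i = \sum_{j=1}^d v_j P_{ji}$. Because the entries $P_{ji}$ are independent and each distributed as $\mathcal{N}(0,\tfrac{1}{k})$, the quantity $W_i$ is a linear combination of independent centered Gaussians and is therefore itself Gaussian with mean $0$ and variance $\sum_{j=1}^d v_j^2 \cdot \tfrac{1}{k} = \tfrac{\|v\|_2^2}{k}$. Crucially, this variance is identical for every column $i$, so the $W_i$ are identically distributed (though not independent, which does not matter for a union bound).

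Next I would note that the factor $k$ multiplying the maximum is cosmetic: the event $\{k \max_{1 \le i \le k} |W_i| > kt\}$ is literally the event $\{\max_{1 \le i \le k} |W_i| > t\}$. Applying the union bound over the $k$ columns then gives $\mathbf{Pr}[\max_i |W_i| > t] \le \sum_{i=1}^k \mathbf{Pr}[|W_i| > t] = k\,\mathbf{Pr}[|W_1| > t]$, where the final equality uses that the $W_i$ are identically distributed.

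Finally I would invoke the two-sided Gaussian tail bound: for $W \sim \mathcal{N}(0,\sigma^2)$ and $t>0$ one has $\mathbf{Pr}[|W|>t] \le 2\exp(-t^2/(2\sigma^2))$, which follows from the standard bound $\mathbf{Pr}[Z>a] \le \exp(-a^2/2)$ for a standard normal $Z$ after rescaling by $\sigma$. Substituting $\sigma^2 = \|v\|_2^2/k$ yields $\mathbf{Pr}[|W_1|>t] \le 2\exp(-kt^2/(2\|v\|_2^2))$, and multiplying by the union-bound factor $k$ gives the claimed bound $2k\exp(-kt^2/(2\|v\|_2^2))$. I do not anticipate a genuine obstacle here; the argument is routine once the reduction to a single Gaussian is made, and the only points demanding care are computing the per-column variance correctly as $\|v\|_2^2/k$ (this is what produces the $k$ in the exponent) and tracking the constant $2$ from the two-sided tail, which together with the union-bound factor yields the leading coefficient $2k$.
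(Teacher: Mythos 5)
Your proposal is correct and follows essentially the same route as the paper: identify each column sum as $\mathcal{N}(0,\|v\|_2^2/k)$, apply the two-sided Gaussian (sub-Gaussian) tail bound, and take a union bound over the $k$ columns. Your observation that the factor $k$ on both sides of the event is cosmetic corresponds to the paper's "multiplying both sides by $k$" step.
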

\begin{proof}
First note that, $ \sum_{j=1}^{d} v_jP_{ji} \sim \mathcal{N}(0,\frac{||v||^2_2}{k})$. Therefore, it is Gaussian and hence Sub-Gaussian, which lets us use tail bounds for Sub-Gaussian random variables and get: 

$\mathbf{Pr} \left[\left| \sum_{j=1}^{d} v_jP_{ji} \right|> t \right] \leq 2e^{\frac{-kt^2}{2||v||^2_2}}$. Using the union bound and multiplying both sides by $k$, we get the desired result. \\
\end{proof}

We now show that our mechanism provides differential privacy guarantees with respect to row changes in the data matrix. 

\begin{theorem}\label{theorem5}
For any two row-wise neighboring databases $X,X' \in \R^{n \times d}$,  that differ in row $m$, such that $||X_m - X_m'||^2_2 \leq \alpha$,  Algorithm 1. achieves $\epsilon$-differential privacy with probability at least $1 - 2ke^{\frac{-kt^2}{2\alpha}}$, where $t \geq \sqrt{\frac{2 \text{ln }2k}{k}\alpha}$.
\end{theorem}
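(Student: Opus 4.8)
The plan is to mirror the proof of Theorem \ref{theorem3} exactly, replacing the element-wise sensitivity bound (Lemma \ref{lemma1}) with the row-wise concentration bound (Lemma \ref{lemma4}). First I would observe that since $X$ and $X'$ are row-wise neighbors differing only in row $m$, the matrix $X - X'$ has a single nonzero row equal to $v := X_m - X_m' \in \R^d$ with $\|v\|_2^2 \leq \alpha$. Consequently $Y - Y' = (X - X')P$ also has a single nonzero row, namely $vP \in \R^k$, whose $i$-th entry is $\sum_{j=1}^d v_j P_{ji}$. Flattening $Y, Y', \Delta$ to vectors in $\R^{nk}$ as in Theorem \ref{theorem3}, the identical Laplace-mechanism calculation yields $\mathbf{Pr}[Z' \in D] \leq e^{\|Y' - Y\|_1 / b}\,\mathbf{Pr}[Z \in D]$ with $b = c/\epsilon$, so $\epsilon$-differential privacy holds whenever $\|Y - Y'\|_1 \leq c$.

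The core step is to bound $\|Y - Y'\|_1$. Since only row $m$ is nonzero, $\|Y - Y'\|_1 = \|vP\|_1 = \sum_{i=1}^k \left| \sum_{j=1}^d v_j P_{ji} \right| \leq k \max_{1 \leq i \leq k} \left| \sum_{j=1}^d v_j P_{ji} \right|$, which is precisely the quantity controlled by Lemma \ref{lemma4}. Applying that lemma gives $\mathbf{Pr}[\|vP\|_1 > kt] \leq 2k e^{-kt^2 / (2\|v\|_2^2)}$, and using $\|v\|_2^2 \leq \alpha$ to weaken the exponent yields $\mathbf{Pr}[\|vP\|_1 > kt] \leq 2k e^{-kt^2 / (2\alpha)}$.

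To close the argument I would set $c = kt$, so that the privacy condition $\|Y - Y'\|_1 \leq c$ becomes $\|vP\|_1 \leq kt$, which fails with probability at most $2k e^{-kt^2/(2\alpha)}$; hence Algorithm 1 is $\epsilon$-differentially private with probability at least $1 - 2k e^{-kt^2/(2\alpha)}$. The stated constraint $t \geq \sqrt{\tfrac{2\ln 2k}{k}\,\alpha}$ is exactly what makes this a nontrivial guarantee: solving $2k e^{-kt^2/(2\alpha)} \leq 1$ for $t$ gives $t^2 \geq \tfrac{2\alpha\ln(2k)}{k}$, so the threshold guarantees the failure probability does not exceed one.

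The main obstacle is the tail-bound step rather than the mechanism calculation, which is copied verbatim from Theorem \ref{theorem3}. Specifically, the work is in recognizing that a single-row change forces the entire $L_1$ discrepancy into the $k$ projected coordinates $\sum_j v_j P_{ji}$, each of which is distributed as $\mathcal{N}(0, \|v\|_2^2/k)$, and then controlling their maximum via the sub-Gaussian union bound of Lemma \ref{lemma4}; the factor of $k$ relating $\|vP\|_1$ to the per-coordinate maximum is what ultimately forces the choice $c = kt$ and propagates into both the probability bound and the admissible range of $t$.
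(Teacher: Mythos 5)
Your proposal is correct and follows essentially the same route as the paper's own proof: the single-row discrepancy is bounded by $k$ times the maximum of the $k$ Gaussian coordinates $\sum_j v_j P_{ji}$, Lemma \ref{lemma4} supplies the tail bound, $c$ is set to $kt$, and the Laplace-mechanism step is carried over verbatim from Theorem \ref{theorem3}. Your remark that the condition $t \geq \sqrt{\tfrac{2\ln 2k}{k}\alpha}$ exists precisely to make the failure probability at most one matches the paper's stated purpose for that constraint.
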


\begin{proof}
Suppose that $X, X'$ differ in row $m$, then we have: 
\begin{align}
||XP - X'P||_1  &= \sum_{i=1}^{k}\left| \sum_{j=1}^{d} P_{ji} (X_{mj} - X_{mj}') \right| \nonumber \\
&\leq k \max\limits_{1 \leq i \leq k} \left| \sum_{j=1}^{d} P_{ji} (X_{mj} - X_{mj}') \right|    \nonumber
\end{align}

Next, we can see that $ \sum_{j=1}^{d} P_{ji} (X_{mj} - X_{mj}') \sim \mathcal{N}(0,\frac{||X_m - X_m'||^2_2}{k})$. Hence, we can use Lemma \ref{lemma4} to get:  

\begin{align}
&\mathbf{Pr} \left[k \max\limits_{1 \leq i \leq k}\left| \sum_{j=1}^{d} P_{ji} (X_{mj} - X_{mj}')\right|> kt \right] \nonumber \\ 
&\leq 2ke^{\frac{-kt^2}{2||X_m - X_m'||^2_2}} \nonumber
\end{align} 

Setting $c = kt$, and $t \geq \sqrt{\frac{2 \text{ln }2k}{k}\alpha}$ we ensure that $2ke^{\frac{-kt^2}{2||X_m - X_m'||^2_2}} \in [0,1]$. Once again letting $Y = XP$ and $Y' = X'P$ and without loss of generality, letting $Y,Y', \Delta \in \R^{nk}$, we set $Z = Y + \Delta$,  $Z' = Y' + \Delta$, and let $D \subset \R^{nk}$. By following the same steps as we did in Theorem \ref{theorem3}. we get:
\begin{align}
\mathbf{Pr}\left[Z' \in D\right] &\leq e^{\frac{1}{b}(||Y' - Y||_1)}\mathbf{Pr}\left[Z \in D\right]  \nonumber
\end{align}

Now, we constrain $\frac{||Y - Y'||_1}{b} < \frac{c}{b}$.

\begin{align}
&\mathbf{Pr}\left[\frac{||Y - Y'||_1}{b} >\frac{c}{b}\right] \nonumber \\
&=\mathbf{Pr}\left[||Y - Y'||_1 > c\right] \nonumber \\
&=\mathbf{Pr}\left[||(X - X')P||_1 > c \right] \nonumber \\
&\leq \mathbf{Pr}\left[ k \max\limits_{1 \leq i \leq k}\left| \sum_{j=1}^{d} P_{ji} (X_{mj} - X_{mj}') \right| > c \right] \nonumber \\
&\leq 2ke^{\frac{-kt^2}{2||X_m - X_m'||^2_2}} \nonumber \\
&\leq 2ke^{\frac{-kt^2}{2\alpha}} \nonumber
\end{align}

Hence, Algorithm 1. achieves $\epsilon$-differential privacy with probability at least $1 - 2ke^{\frac{-kt^2}{2\alpha}}$.

\end{proof}

\section{Utility Guarantees}
A differentially private mechanism that is also an isometric isomorphism would allow any machine learning algorithm to extract the same amount of utility from the private data as it could from the non-private data. Drawing from that intuition, we also measure utility as was proposed in \cite{kenthapadi2012privacy}, by the degree to which a privacy mechanism preserves pairwise distances after its action. That is, a mechanism that allows pairwise distances to be preserved in the private representation of the data is more useful than one that does not. In order to capture this notion, we first define a distance recovery algorithm in Algorithm \ref{alg:algorithm2} that takes as input, two private data points and outputs the distance between them. We then show that the algorithm preserves squared distances in expectation. Further, we show that the variance of the squared distance between any two points is proportional to the dimensionality of the subspace that the mechanism projects the data into. 

\begin{algorithm}\label{algorithm2}
	\SetKwInOut{Input}{Input}
    \SetKwInOut{Output}{Output}

	\Input{$Z \in \R^{n \times k}$, $\sigma^{2}$, $(i,j) \in \{1, \dots, n\} \times \{1, \dots, n\}$ }
	\Output{Distance between $Z_i$ and $Z_j$}
  
    \begin{enumerate}
    \item Output $\mathcal{D}(Z_i,Z_j) = ||Z_i - Z_j||_{2}^{2} - 2k\sigma^{2}$
    \end{enumerate}
    
 \caption{Recover Distance}
 \label{alg:algorithm2}
\end{algorithm}

\subsection{Guarantees}
\begin{claim}\label{claim6}
Let $S \subset \mathbb{R}^{d}$ with $|S| = n$. Then, given any two entries in this set $x_i, x_j \in S$, let $y_i = x_iP + \Delta_i$ and $y_j = x_jP + \Delta_j$, where $P$ and $\Delta_i, \Delta_j$ are the projection matrix and the noise vectors respectively. Let, $\mathcal{D}(\cdot, \cdot)$ be defined as in Algorithm 2. Then, $\mathcal{D}(y_i, y_j)$ is an unbiased estimator of $||x_i - x_j||_{2}^{2}$. 
\end{claim}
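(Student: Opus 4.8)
The plan is to compute $\mathbb{E}[\mathcal{D}(y_i, y_j)]$ directly and show it equals $||x_i - x_j||_2^2$, where the expectation is taken jointly over the random projection matrix $P$ and the independent noise vectors $\Delta_i, \Delta_j$. First I would write $y_i - y_j = (x_i - x_j)P + (\Delta_i - \Delta_j)$, abbreviating $u = x_i - x_j \in \R^d$ and $\delta = \Delta_i - \Delta_j \in \R^k$. Expanding the squared Euclidean norm then yields the decomposition
\begin{align*}
||y_i - y_j||_2^2 = ||uP||_2^2 + 2\langle uP, \delta\rangle + ||\delta||_2^2,
\end{align*}
so by linearity of expectation it suffices to treat the three terms separately.

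For the first term, the Johnson--Lindenstrauss lemma stated in Section 2 gives immediately that $\mathbb{E}[||uP||_2^2] = ||u||_2^2 = ||x_i - x_j||_2^2$, which is exactly the quantity we wish to recover. For the cross term, I would invoke that $P$ is independent of the noise and that the Laplacian noise is zero-mean: conditioning on $P$ and taking the inner expectation over $\delta$ gives $\mathbb{E}[\langle uP, \delta\rangle \mid P] = \langle uP, \mathbb{E}[\delta]\rangle = 0$, so the cross term contributes nothing. The third term is a pure noise contribution: writing $\delta_\ell = (\Delta_i)_\ell - (\Delta_j)_\ell$ for each coordinate $\ell$, independence and the zero mean give $\mathbb{E}[\delta_\ell^2] = \mathrm{Var}((\Delta_i)_\ell) + \mathrm{Var}((\Delta_j)_\ell) = 2\sigma^2$, where $\sigma^2$ is the variance of a single Laplacian noise entry; summing over the $k$ coordinates yields $\mathbb{E}[||\delta||_2^2] = 2k\sigma^2$.

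Combining the three pieces gives $\mathbb{E}[||y_i - y_j||_2^2] = ||x_i - x_j||_2^2 + 2k\sigma^2$, and subtracting the correction $2k\sigma^2$ prescribed by Algorithm 2 cancels the noise contribution exactly, leaving $\mathbb{E}[\mathcal{D}(y_i, y_j)] = ||x_i - x_j||_2^2$, i.e. $\mathcal{D}$ is unbiased. The step requiring the most care is the identification of $\sigma^2$ as the per-entry noise variance (for $\mathrm{Laplacian}(0, c/\epsilon)$ this is $2(c/\epsilon)^2$) together with the bookkeeping that the difference of two independent noise vectors doubles that variance per coordinate; getting this factor of two right is precisely what makes the deterministic offset $2k\sigma^2$ cancel. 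The independence of $P$ from the noise is the other ingredient that must be stated explicitly, since it is what eliminates the cross term.
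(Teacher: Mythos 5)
Your proposal is correct and follows essentially the same route as the paper's proof: the same three-term expansion of $\|y_i-y_j\|_2^2$, the JL lemma in expectation for the projected term, independence and zero mean to kill the cross term, and the per-coordinate variance computation $\mathbb{E}[\delta_\ell^2]=2\sigma^2=4b^2$ so that the $2k\sigma^2$ offset cancels exactly. Your version is if anything slightly cleaner in stating explicitly that $\sigma^2$ is the per-entry noise variance ($2(c/\epsilon)^2$ for the Laplacian), a point the paper leaves somewhat implicit.
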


\begin{proof}
Let $\Delta = \Delta_i - \Delta_j$, and let $\sigma^2$ be the variance of the entries of the projection matrices $\Delta_i$ and $\Delta_j$. Then we have: \\

\begin{align}
&\mathbb{E}\left[\mathcal{D}(y_i,y_j)\right] = \mathbb{E}\left[ ||x_iP + \Delta_i - x_jP - \Delta_j||_2^{2} - 2k\sigma^{2} \right] \nonumber \\
&= \mathbb{E}\left[ ||(x_i- x_j)P + \Delta||_2^{2} - 2k\sigma^{2} \right] \nonumber\\
&= \mathbb{E}\left[ ||(x_i- x_j)P ||_2^{2} + ||\Delta ||_2^{2} + 2\langle (x_i - x_j)P, \Delta  \rangle - 2k\sigma^{2} \right] \nonumber\\
\begin{split}
&= \mathbb{E} [ ||(x_i- x_j)P ||_2^{2}] + \mathbb{E}[||\Delta ||_2^{2}] + \\ 
&\enspace \enspace \enspace \enspace 2\mathbb{E}[\langle (x_i - x_j)P, \Delta  \rangle] - 2k\sigma^{2}
\end{split} \nonumber \\
\begin{split}
&= \mathbb{E} [ ||(x_i- x_j)P ||_2^{2}] + \mathbb{E}[\sum_{t=1}^{k}(\Delta_{i_{t}} - \Delta_{j_{t}})^{2}] + \\ 
& \enspace \enspace \enspace \enspace 2\mathbb{E}[\langle (x_i - x_j)P, \Delta  \rangle] - 2k\sigma^{2} 
\end{split} \nonumber \\
\begin{split}
&= \mathbb{E} [ ||(x_i- x_j)P ||_2^{2}] + \mathbb{E}[\sum_{t=1}^{k}(\Delta_{i_{t}})^{2} + (\Delta_{j_{t}})^{2} - 2\Delta_{i_{t}}\Delta_{j_{t}}] + \\
& \enspace \enspace \enspace \enspace  2\mathbb{E}[\langle (x_i - x_j)P, \Delta  \rangle] - 2k\sigma^{2} 
\end{split} \nonumber \\
&= \mathbb{E} [ ||(x_i- x_j)P ||_2^{2}] + \sum_{t=1}^{k}4b^{2} + 2\mathbb{E}[\langle (x_i - x_j)P, \Delta  \rangle] - 2k\sigma^{2} \nonumber\\
&= ||x_i- x_j ||_2^{2} + 4kb^{2} + 0 - 2k\sigma^{2} \nonumber\\
&= ||x_i- x_j ||_2^{2} + 4kb^{2} + 0 - 2k(2b^2) \nonumber\\
&= ||x_i- x_j ||_2^{2} \nonumber
\end{align}

Note that by the Johnson-Lindenstrauss lemma, we have $ \mathbb{E} [ ||(x_i- x_j)P ||_2^{2}] = ||x_i- x_j ||_2^{2}$. We now show that $2\mathbb{E}[\langle (x_i - x_j)P, \Delta  \rangle] = 0$,  \\

Letting $\mathbf{a} = (x_i - x_j)$, we have \\

\begin{align}
2\mathbb{E}[\langle (x_i - x_j)P, \Delta  \rangle] &= 2\mathbb{E}[\langle \mathbf{a}P, \Delta  \rangle] \nonumber\\
&= 2\sum_{t=1}^{k}\mathbb{E}[(\mathbf{a}P)_t] \mathbb{E}[\Delta_t] \nonumber\\
&= 2\sum_{t=1}^{k}\mathbb{E}[(\sum_{m=1}^{d}a_mP_{mt})] \mathbb{E}[\Delta_t] \nonumber\\
&= 2\sum_{t=1}^{k}(\sum_{m=1}^{d}a_m\mathbb{E}[P_{mt}]) \mathbb{E}[\Delta_t] \nonumber\\
&=0\nonumber
\end{align}

where we have used the independence of $P$ and $\Delta$. 

\end{proof}

\begin{claim}\label{claim7}
Let $S \subset \mathbb{R}^{d}$ with $|S| = n$. Then, given any two entries in this set $x_i, x_j \in S$, let $y_i = x_iP + \Delta_i$ and $y_j = x_jP + \Delta_j$, where $P$ and $\Delta_i, \Delta_j$ are the projection matrix and the noise vectors respectively. Then the variance of $\mathcal{D}(y_i,y_j) = \frac{2}{k}||x_i - x_j||_2^{4} + 2k(7 \sigma^{4} - \sigma^{2}) + 4 \sigma^{2}||x_i - x_j||_2^{2} $, where $\sigma^2$ is the variance of the entries of $\Delta_i$ and $\Delta_j$. 
\end{claim}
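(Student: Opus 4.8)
The plan is to note first that subtracting the deterministic constant $2k\sigma^2$ in Algorithm~\ref{alg:algorithm2} leaves the variance unchanged, so it suffices to compute $\mathrm{Var}\big(\|y_i-y_j\|_2^2\big)$. Writing $\mathbf{a}=x_i-x_j$ and $\Delta=\Delta_i-\Delta_j$ exactly as in the proof of Claim~\ref{claim6}, I would expand the squared norm into three pieces,
\begin{equation}
\|y_i-y_j\|_2^2 = \|\mathbf{a}P\|_2^2 + \|\Delta\|_2^2 + 2\langle \mathbf{a}P,\Delta\rangle =: U + V + C, \nonumber
\end{equation}
and then compute the variance of the sum by decomposing it as $\mathrm{Var}(U)+\mathrm{Var}(V)+\mathrm{Var}(C)$ plus the three pairwise covariances.

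The first simplification I would establish is that all three covariance terms vanish. Since $P$ and $\Delta$ are independent and have mean zero in the relevant coordinates, $\mathrm{Cov}(U,V)=0$ because $U$ is a function of $P$ alone and $V$ a function of $\Delta$ alone; and $\mathrm{Cov}(U,C)$ and $\mathrm{Cov}(V,C)$ vanish after factoring the expectations across the independent $P$ and $\Delta$, using $\mathbb{E}[\Delta_t]=0$ and $\mathbb{E}[(\mathbf{a}P)_t]=0$ respectively (the same cancellation already used to kill the cross term in Claim~\ref{claim6}). This reduces the problem to the three marginal variances.

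Next I would handle each marginal. For $U$, each coordinate $(\mathbf{a}P)_t=\sum_m a_m P_{mt}$ is $\mathcal{N}(0,\|\mathbf{a}\|_2^2/k)$ and the $k$ coordinates are independent, so $U$ is $(\|\mathbf{a}\|_2^2/k)$ times a $\chi^2_k$ variable; using $\mathrm{Var}(\chi^2_k)=2k$ gives $\mathrm{Var}(U)=\tfrac{2}{k}\|\mathbf{a}\|_2^4$, the first term of the claim. For $C$, since $\mathbb{E}[C]=0$ and cross-coordinate terms drop out by independence, $\mathrm{Var}(C)=4\sum_t \mathbb{E}[(\mathbf{a}P)_t^2]\,\mathbb{E}[\Delta_t^2]$, which yields a multiple of $\sigma^2\|\mathbf{a}\|_2^2$. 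For $V=\sum_t \Delta_t^2$, independence across $t$ gives $\mathrm{Var}(V)=k\,\mathrm{Var}(\Delta_t^2)=k\big(\mathbb{E}[\Delta_t^4]-(\mathbb{E}[\Delta_t^2])^2\big)$, which contributes the $\sigma^4$ term.

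I expect the main obstacle to be the exact bookkeeping of the Laplacian moments in the $V$ and $C$ terms. Here $\Delta_t=\Delta_{i_t}-\Delta_{j_t}$ is a difference of two independent $\mathrm{Laplacian}(0,b)$ variables, so I would need the even moments $\mathbb{E}[L^2]=2b^2$ and $\mathbb{E}[L^4]=24b^4$ of a single Laplacian, combine them via the binomial expansion of $(\Delta_{i_t}-\Delta_{j_t})^4$ (the odd moments vanishing by symmetry) to obtain $\mathbb{E}[\Delta_t^4]$ and $\mathbb{E}[\Delta_t^2]$, and then re-express everything through the relation $\sigma^2=2b^2$. Carefully tracking these constants through $\mathrm{Var}(V)$ and $\mathrm{Var}(C)$ and summing with $\mathrm{Var}(U)$ is what produces the stated closed form; the principal risk in this last step is a mis-tracked factor of two when converting between the scale $b$ and the variance $\sigma^2$.
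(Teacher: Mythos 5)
Your plan is, step for step, the paper's own argument: the paper sets $Z_1=\|(x_i-x_j)P\|_2^2$, $Z_2=\|\Delta\|_2^2$, $Z_3=2\langle(x_i-x_j)P,\Delta\rangle$, kills the three covariances using independence of $P$ and $\Delta$ and the zero means, obtains $\mathrm{Var}(Z_1)=\tfrac{2}{k}\|x_i-x_j\|_2^4$ from the $\chi^2_k$ structure, and asserts $\mathrm{Var}(Z_2)=14k\sigma^4$, which your Laplacian moment bookkeeping indeed confirms: with $\mathbb{E}[L^4]=24b^4$ and $\sigma^2=2b^2$ one gets $\mathrm{Var}\big((\Delta_{i_t}-\Delta_{j_t})^2\big)=72b^4-16b^4=56b^4=14\sigma^4$. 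Up to this point the two arguments coincide and your version is the more explicit one.

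However, your opening observation --- that subtracting the deterministic $2k\sigma^2$ leaves the variance unchanged --- is exactly where your derivation will part ways with the statement you are asked to prove. The claimed middle term is $2k(7\sigma^4-\sigma^2)=14k\sigma^4-2k\sigma^2$, and the paper manufactures the $-2k\sigma^2$ by writing $\mathrm{Var}\big(\|y_i-y_j\|_2^2-2k\sigma^2\big)=\mathrm{Var}(Z_1+Z_2+Z_3)-2k\sigma^2$, i.e., by letting the additive constant pass through the variance operator. Since you correctly discard the constant, your computation terminates at $\tfrac{2}{k}\|x_i-x_j\|_2^4+14k\sigma^4+\mathrm{Var}(Z_3)$ and cannot produce the $-2k\sigma^2$. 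A second constant to watch: in your $\mathrm{Var}(C)=4\sum_t\mathbb{E}[(\mathbf{a}P)_t^2]\,\mathbb{E}[\Delta_t^2]$, the coordinate $\Delta_t=\Delta_{i_t}-\Delta_{j_t}$ is a difference of two independent Laplacians and so has second moment $2\sigma^2$, not $\sigma^2$; the careful bookkeeping you defer therefore yields $8\sigma^2\|x_i-x_j\|_2^2$ rather than the claimed $4\sigma^2\|x_i-x_j\|_2^2$. In short, your route is the paper's route, but executed rigorously it proves a formula that differs from the stated one in these two constants, so as written your plan cannot close the proof of the claim in its stated form --- you would either have to adopt the paper's (invalid) treatment of the constant or amend the statement.
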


\begin{proof}
Let, \\
\begin{align}
Z_1 &= ||(x_i- x_j)P ||_2^{2}\nonumber \\
Z_2 &= ||\Delta ||_2^{2} \nonumber\\
Z_3 &= 2\langle (x_i - x_j)P, \Delta  \rangle \nonumber
\end{align}

Then, Var$(||x_iP + \Delta_i - x_jP - \Delta_j||_2^{2} - 2k\sigma^{2}) = \text{Var} (Z_1 + Z_2 + Z_3) - 2k \sigma^{2}$. Then, \\

\begin{align}
\begin{split}
\text{Var}(Z_1 + Z_2 + Z_3) &= \mathbb{E}[(Z_1 + Z_2 + Z_3)^{2}] \\
&- (\mathbb{E}[Z_1 + Z_2 + Z_3])^{2} 
\end{split}\nonumber\\
\begin{split}
&= \mathbb{E}[Z_1^{2}] - \mathbb{E}[Z_1]^{2} + \mathbb{E}[Z_2^{2}] -  \\
&\mathbb{E}[Z_2]^{2} + \mathbb{E}[Z_3^{2}] - \mathbb{E}[Z_3]^{2} + 
\end{split} \nonumber\\
\begin{split}
 &2\mathbb{E}[Z_1 Z_2] - 2\mathbb{E}[Z_1]\mathbb{E}[Z_2] +\\
 &2\mathbb{E}[Z_2 Z_3] - 2\mathbb{E}[Z_2]\mathbb{E}[Z_3]
\end{split} \nonumber \\
 &+  2\mathbb{E}[Z_1 Z_3] - 2\mathbb{E}[Z_1]\mathbb{E}[Z_3] \nonumber\\
 \begin{split}
&= \frac{2}{k}||x_i - x_j||_2^{4} + 14k\sigma^{4} + \\
&4\sigma^{2}||x_i - x_j||_2^{2} 
\end{split}\nonumber\\
\end{align}

where we have used the following : \\
\begin{align}
\mathbb{E}[Z_{1}^{2}] - \mathbb{E}[Z_1]^{2} &= \frac{2}{k}||x_i - x_j||_2^{4} \nonumber\\
\mathbb{E}[Z_2^{2}] - \mathbb{E}[Z_2]^{2} &= 14k\sigma^{4}\nonumber \\
\mathbb{E}[Z_3^{2}] - \mathbb{E}[Z_3]^{2} &= 4\sigma^{2}||x_i - x_j||_2^{2} \nonumber
\end{align}

Using independence of $Z_1$ and $Z_2$ we have $2\mathbb{E}[Z_1 Z_2] = 2\mathbb{E}[Z_1]\mathbb{E}[Z_2]$. For the rest of the variables we have $2\mathbb{E}[Z_2 Z_3] = 2\mathbb{E}[Z_2]\mathbb{E}[Z_3] =  2\mathbb{E}[Z_1 Z_3] = 2\mathbb{E}[Z_1]\mathbb{E}[Z_3] = 0$.  Using these, the required result follows. 

\end{proof}

We can consider the probability of the distance recovery algorithm exceeding a fixed error $\lambda$. More specifically, letting $x_i, x_j$ be two points in the original space and letting $y_i, y_j$ be the points after the action of the mechanism, we want to know how this value is bounded : $\mathbf{Pr}\left[ \left| \mathcal{D}(y_i,y_j) - ||x_i - x_j||_2^{2} \right| > \lambda \right]$. Using the Chebychev inequality, we get:\\
\begin{align}
\mathbf{Pr}\left[ \left| \mathcal{D}(y_i,y_j) - \mathbb{E}[\mathcal{D}(y_i,y_j)] \right| > \lambda \right] \leq \frac{Var(\mathcal{D}(y_i,y_j))}{\lambda^2} \nonumber \\ 
\therefore \mathbf{Pr}\left[ \left| \mathcal{D}(y_i,y_j) - ||x_i - x_j||_2^{2} \right| > \lambda \right] \leq \frac{Var(\mathcal{D}(y_i,y_j))}{\lambda^2} \nonumber \nonumber
\end{align}

Therefore, we see that the probability the distance recovery algorithm exceeds a fixed error is proportional to the distance between the original points and the dimensionality of the subspace we project the data into.

\begin{figure}[htb]
\begin{tabular}{ c }
{\includegraphics[width=3in, height=1.8in]{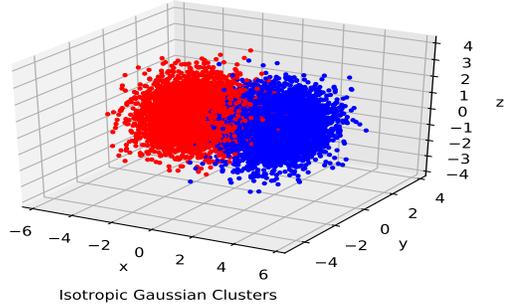}} 
\end{tabular}
\caption{Scatter plot of the original 3D dataset with colors separating the two clusters}
\label{fig:3d-vis}
\end{figure}

\begin{figure}[htb]
\begin{tabular}{ c }
{\includegraphics[width=3in, height=1.8in]{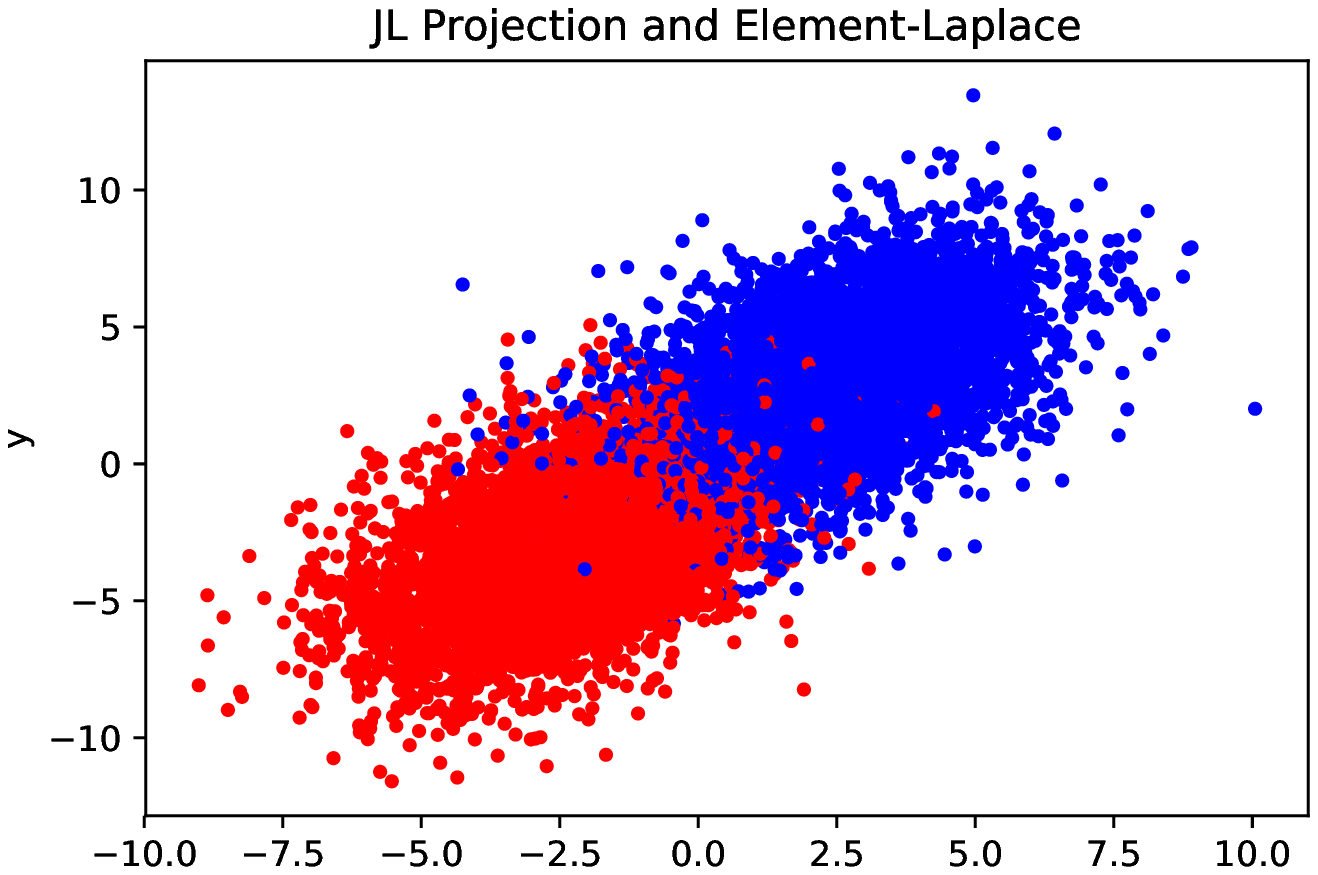}} \\
{\includegraphics[width=3in, height=1.8in]{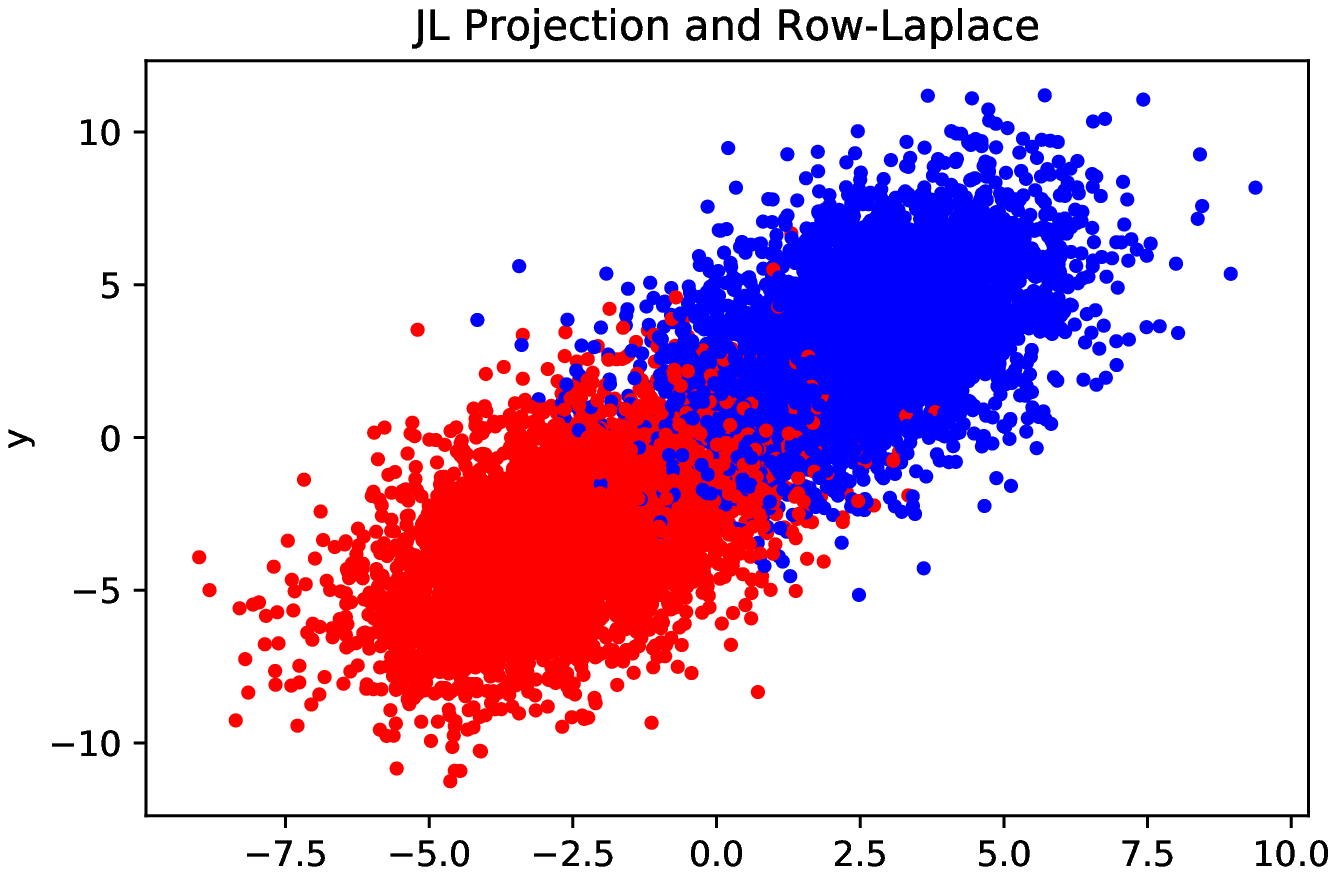}} \\
\end{tabular}
\caption{Scatter plots of the 2D dataset for the Element-wise privacy mechanism and the Row-wise privacy mechanism with colors separating the two clusters. Each privacy mechanism turns the spherical data into ellipses that are stretched along the directions of the noise while still maintaining separation.}
\label{fig:2d-proj}
\end{figure}

\section{Experiments}
\label{experiments}

All of our experiments are based on a synthetic dataset comprising of two clusters generated through the procedure developed by Guyon (2003), which is commonly referred to as the Madelon dataset. In this data generation procedure, the data points for each cluster are sampled from an isotropic Gaussian distribution. We fix the Euclidean distance between the cluster centers to $4$ and use an $\epsilon$ of 4 for all our experiments. We also assume that for two neighboring databases (element-wise or row-wise), the norm of their difference is bounded by 1. That is, for two neighboring databases $X,X' \in \mathbb{R}^{n \times d}$, we have $||X - X'||_1 \leq 1$.  We use the open source implementation of this data generation procedure provided in Scikit as \emph{sklearn.datasets.make\_blob}  (Pedregosa et al. (2011)).

Since this dataset is decoupled from any specific problem domain and is a two class clustering problem, it allows us to demonstrate the utility of our mechanism with maximum generality. In order to better understand the generated data in higher dimensions, we first generate data in $\R^3$ and provide its visualization in Figure \ref{fig:3d-vis}. Next, we illustrate the effect of the element-wise and  row-wise privacy mechanisms defined in Algorithm \ref{alg:algorithm1}, by visualizing the data using $k=2$ (i.e. projecting it into $\R^2$ and making it private) in Figure \ref{fig:2d-proj}.   

Using the same dataset defined above, we verify the utility guarantees of our distance recovery algorithm (Algorithm \ref{alg:algorithm2}).  In order to do so, we first sample 1000 pairs of points from the original dataset and calculate the squared Euclidean distance between each pair. This gives us a total of 1000 distances. We then run each pair through our privacy mechanism 1000 times using a new projection and perturbation vector each time (giving us 1 million private pairs). We then use our distance recovery algorithm ( Algorithm \ref{alg:algorithm2} ) to calculate the squared Euclidean distance between each private pair of points, giving us 1 million distances for the private data points (1000 distances for each private pair). Next, we plot the distribution of differences in the squared Euclidean distance between the original pair and the private pairs in Figure \ref{fig:distance_recov}.  One can see that the distance recovery algorithm does indeed recover the squared Euclidean distances in expectation. For one run of the experiment, the mean of the differences for element-wise private pairs and row-wise private pairs from the original pairs was found to be $0.006$ and $-0.011$ respectively.

\begin{figure}[htb]
\begin{tabular}{ c }
{\includegraphics[width=3in,height=1.8in]{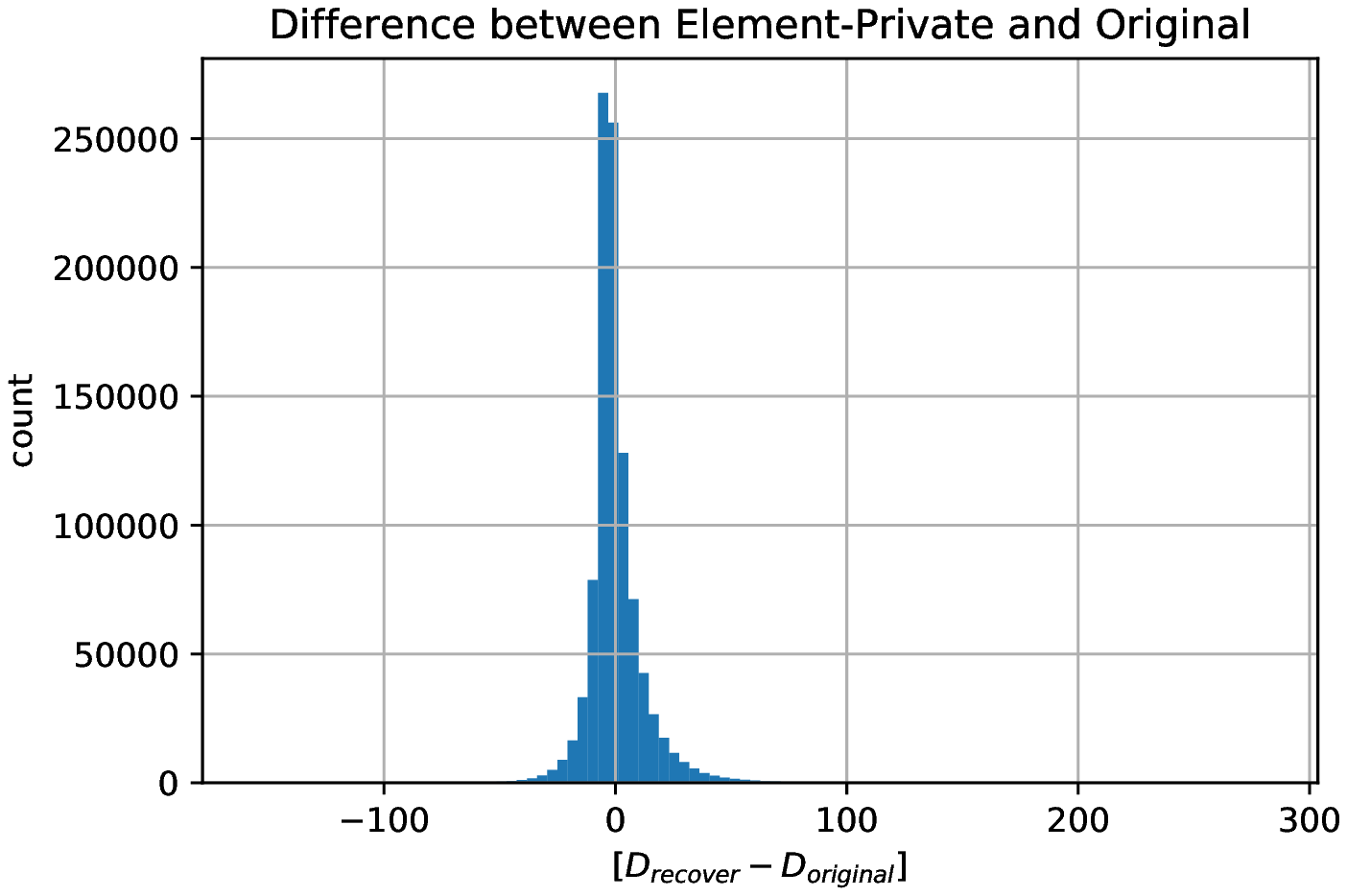}} \\ 
{\includegraphics[width=3in,height=1.8in]{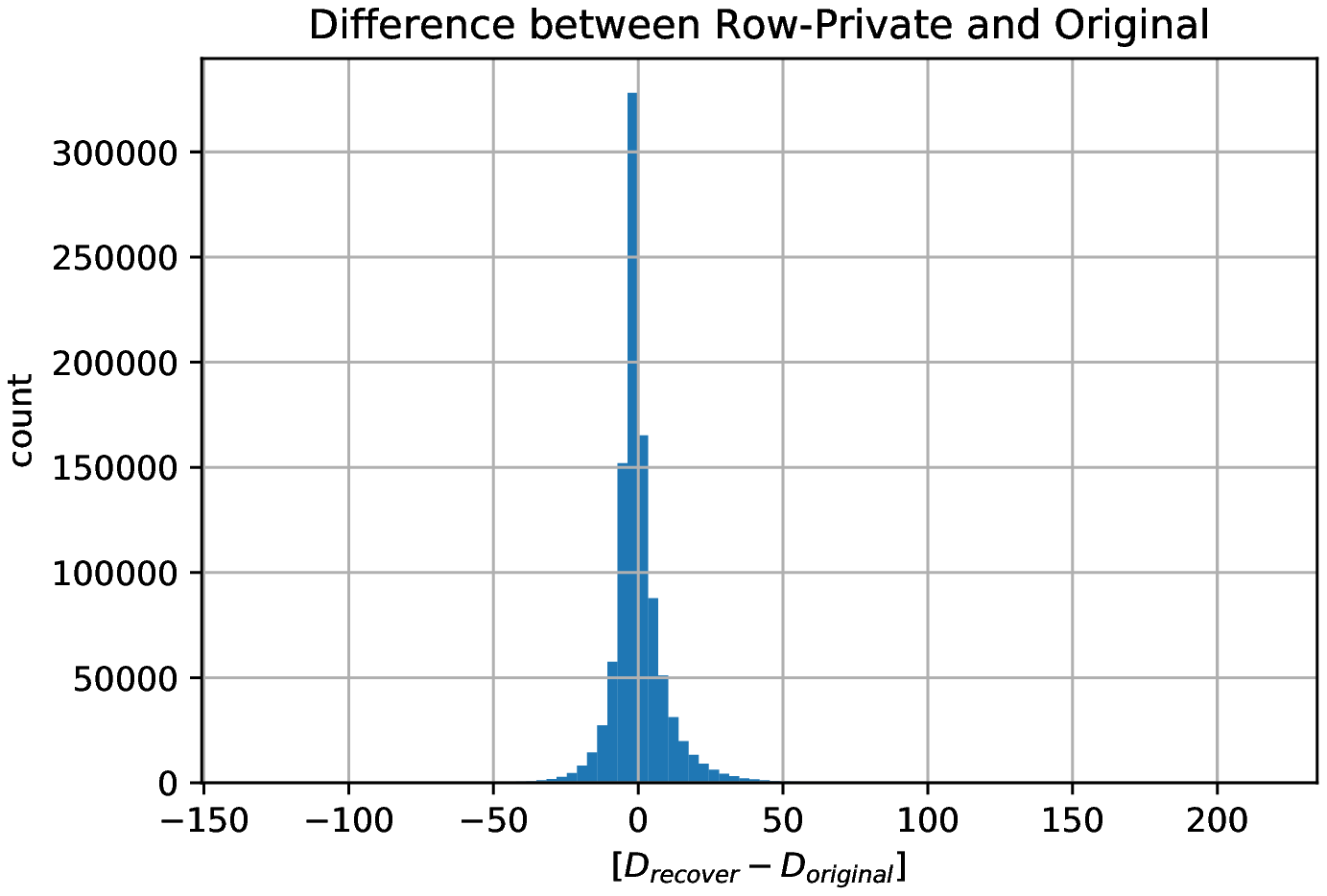}} \\ 
\end{tabular}
\caption{Distribution of the difference between the original squared Euclidean distances and the squared Euclidean distances recovered by our distance recovery algorithm.   }
\label{fig:distance_recov}
\end{figure}

In order to test the utility of our method on the task of clustering, we compare the performance of the k-means clustering algorithm on the original data, element-wise private data, and the row-wise private data. We ran this comparison for a number of datasets in which we varied both the original and projected dimensions. The results of this experiments are provided in Table \ref{tab:Clustering Accuracy}. We note that the mechanism provides good utility for smaller $k$ but the utility deteriorates as we increase the dimensionality of the projected subspace, a result that is expected due to the reliance of Laplacian noise on the projection subspace parameter $k$ as shown in Theorems \ref{theorem3}, and \ref{theorem5}.

We examine the relationship in more detail by plotting the relationship between $k$ and the standard deviation of the  data in Figure \ref{fig:std_inc}. The formula used for this is $\sqrt{1 + 2b^2}$, where 1 is the variance of the original data and $2b^2$ is the variance of the Laplacian noise. It can be noted that the standard deviation increases with $k$ hence negatively affecting the amount of utility provided by the mechanism. We also note a difference in the standard deviation between element-wise and row-wise privacy mechanisms - row-wise privacy comes at a higher cost utility cost than element-wise privacy.

\begin{table*}[t]
\begin{center}
\begin{tabular}{|c|c|c|c|c|c|}
\hline
{\bf Privacy Mechanism} & {\bf $d$=3, $k$=2} & {\bf $d$=10, $k$=3} & {\bf $d$=50, $k$=10} & {\bf $d$=100, $k$=20}  \\
\hline
None & 0.9783 & 0.9772&0.9771 &0.9797 \\
\hline
Element-Wise & 0.9441 & 0.9082 &0.6954 &0.6927\\
\hline
Row-Wise & 0.9477 & 0.909 &0.6796  &0.6668 \\
\hline
\end{tabular}
\end{center}
\caption{Comparison of performance of k-means clustering between the non-private, element-wise private, and row-wise private data. Here $d$ is the original dimension of the data and $k$ is the projected dimension.}
\label{tab:Clustering Accuracy}
\end{table*}

\begin{figure}[htb]
\begin{tabular}{ c }
{\includegraphics[width=3in,height=1.8in]{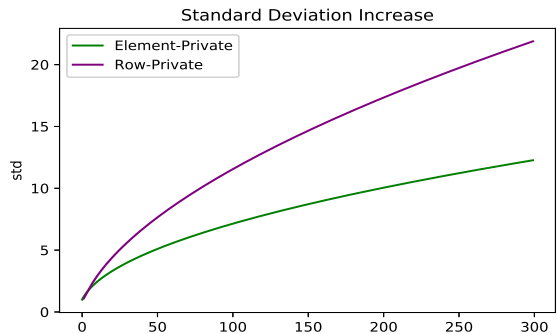}} \\ 
\end{tabular}
\caption{Increase in standard deviation with $k$ }
\label{fig:std_inc}
\end{figure}

Our experiments validate the ability of the distance recovery algorithm (Algorithm \ref{alg:algorithm2}) to recover the squared Euclidean distances between original points and also show that our privacy mechanisms are able to maintain utility in the general task of clustering. We find that the utility of our mechanism deteriorates with an increase in the dimensionality of the projection subspace which agrees with  Theorems \ref{theorem3}, and \ref{theorem5}.

\section{Related Work}
\label{related_work}
Differential privacy is a framework proposed by \cite{dwork2006calibrating} that captures precisely how much additional information of an individual is leaked by participating in a database, that would not have been leaked otherwise. There has been extensive research in proposing mechanisms that guarantee differential privacy in the non-interactive setting \cite{alda2017bernstein, balog2017differentially, mcsherry2007mechanism, dwork2014algorithmic}.  

\cite{kenthapadi2012privacy} developed a randomization mechanism that utilized the JL transform and the Gaussian mechanism \cite{dwork2014algorithmic} to provide non-interactive differential privacy with respect to attribute changes. They showed that their mechanism preserved utility by preserving distances in expectation. However, a shortcoming of this approach was that the privacy guarantees were only provided with respect to attribute changes, and not row level changes, which is a more realistic requirement in practice. Despite that shortcoming, the mechanism was powerful from a privacy perspective, as it had been shown by \cite{liu2006random}  that random projection-based multiplicative perturbation techniques make it impossible to find the exact values of the original data in addition to simply hiding the dimensionality  of the data. Further, they showed that if even if the projection matrix is released, the adversary still cannot find the exact value of any elements from the original data. 

\cite{blocki2012johnson} showed that the JL transform itself preserved differential privacy and provided utility guarantees in the strict case when only the covariance matrix is released. However, in order to provide privacy guarantees, the data matrix was required to be full rank with eigenvalues above some threshold. Since this is not always feasible in practice, they provided a work around which perturbed all the singular values of the data matrix. In practice, this magnitude of this perturbation can be orders of magnitude larger than the attribute values, hence causing general machine learning algorithms to have extremely poor performance. Along similar lines of using multiplicative random projections to preserve privacy for special problems is the work of \cite{zhou2009differential} who showed that multiplicative random projection methods preserved utility in the case of doing PCA.  

Releasing differentially private data raises some fundamental questions about the ability of machine learning algorithms to extract utility from the private data. \cite{kasiviswanathan2011can} showed that in the PAC learning model with a discrete domain, any finite hypothesis class that is PAC learning is also privately PAC learnable. These results were extended to half space queries by \cite{blum2013learning} and the sample complexities of proper and improper learners were analyzed by \cite{beimel2010bounds}. However, \cite{chaudhuri2011sample} showed that there exist simple hypothesis classes over continuous domains that have a small VC dimension and for whom it is impossible learn privately with a finite sample size. \cite{friedman2010data} analyzed the trade-off between privacy, sample complexity, and utility in practice for the case of decision trees. 

Another line of research focused on releasing differentially private models with respect to the data \cite{chaudhuri2009privacy, evfimievski2004privacy, sheffet2015differentially, zhu2017differentially}. \cite{chaudhuri2011differentially} developed a mechanism for private empirical risk minimization that provided private approximates to classifiers and along similar lines \cite{bassily2014differentially} analyzed error bounds on such classifiers. Releasing private models also raised questions between the trade-off of privacy and algorithmic complexity, which was analyzed by \cite{friedman2010data} in practice for the case of decision trees. 

\section{Conclusions and Future Work}
We developed a privacy mechanism that makes it impossible to reconstruct the original data values while also providing utility for general machine learning tasks. We proved privacy guarantees under element and row wise changes, and also proved utility guarantees by proposing an algorithm that maintains pairwise distances between private data points in expectation. We chose the most general task of clustering in order to show the effectiveness of our methods experimentally and validated that it does in fact maintain utility. Noting that the utility of our mechanism deteriorates with an increase in the dimensionality of the projection subspace, we leave open the question of finding a private mechanism that makes reconstruction impossible and provides utility that does not deteriorate with the dimensionality of the problem. 
 
\bibliography{example_paper}
\bibliographystyle{icml2018}

\end{document}